\theoremstyle{definition}
\newtheorem{definition}{Definition}
\newtheorem{theorem}{Theorem}
\newcommand{\Post}{P}% posterior
\newcommand{\Noise}{T}% noise
\newcommand{\magnitude}{\alpha}
\newcommand{\entropy}{\eta}
\newcommand{\system}{\texttt{MLCapsule}\xspace}
\newcommand{\HW}{{\sf HW}}
\newcommand{\params}{{\sf params}}
\renewcommand{\state}{{\sf state}}
\newcommand{\HWSetup}{\HW.{\sf Setup}}
\newcommand{\HWLoad}{\HW.{\sf Load}}
\newcommand{\HWRun}{\HW.{\sf Run}}
\newcommand{\HWQuoteVerify}{\HW.{\sf QuoteVerify}}
\newcommand{\HWsk}{HW.{\sf sk_{quote}}}
\newcommand{\HWRunQuote}{\HW.{\sf RunQuote}_{{\sf sk_{quote}}}}
\newcommand{\HWT}{{\sf T}}
\newcommand{\hdl}{{\sf hdl}}
\newcommand{\HWin}{{\sf in}}
\newcommand{\HWout}{{\sf out}}
\newcommand{\HWmeta}{{\sf md}_\hdl}
\newcommand{\HWtag}{{\sf tag}_Q}
\newcommand{\sig}{\sigma}
\newcommand{\HWquote}{{\sf quote}}
\newcommand{\aux}{{\sf aux}}
\newcommand{\negl}{{\sf negl}}
\newcommand{\Adv}{{\mathcal A}}
\newcommand{\query}{{\sf query}}
\newcommand{\Enc}{{\sf Enc}}
\newcommand{\Dec}{{\sf Dec}}
\newcommand{\KeyGen}{{\sf KeyGen}}
\newcommand{\sk}{{\sf sk}}
\newcommand{\pk}{{\sf pk}}
\newcommand{\M}{{\mathcal M}}
\newcommand{\Or}{{\mathcal O}}
\newcommand{\Sim}{{\sf Sim}}
\newcommand{\tdata}{{\sf train}_{\sf data}}
\newcommand{\MLmodel}{{\sf ML}_{\sf model}}
\newcommand{\MLdef}{{\sf ML}_{\sf def}}
\newcommand{\MLsk}{{\sf ML}_{\sf sk}}
\newcommand{\MLreq}{{\sf ML}_{\sf req}}
\newcommand{\HMLmodel}{{\sf HML}_{\sf model}}
\newcommand{\Train}{{\sf Train}}
\newcommand{\Classify}{{\sf Classify}}
\newcommand{\Provide}{{\sf Provide}}
\newcommand{\Obtain}{{\sf Obtain}}
\newcommand{\InData}{{\sf input_{data}}}
\newcommand{\OutData}{{\sf output_{data}}}
\newcommand{\command}{{\sf command}}
\newcommand{\data}{{\sf data}}
\newcommand{\out}{{\sf out}}
\newcommand{\ExpModelSec}{{\sf Exp}_{\system,\Adv}^{{\sf secrecy}}}
\newcommand{\ExpIndCPA}{{\sf Exp}_{\Enc,\Adv}}
\newcommand{\exec}{\leftarrow}
\newcommand{\rexec}{\leftarrow_{\$}}
\begin{document}
\title{\system: Guarded Offline Deployment of Machine Learning as a Service
}

\author{Lucjan Hanzlik}
\affiliation{%
  \institution{Stanford University, CISPA Helmholtz Center for Information Security}
}

\author{Yang Zhang}
\affiliation{%
  \institution{CISPA Helmholtz Center for Information Security}
}

\author{Kathrin Grosse}
\affiliation{%
  \institution{CISPA Helmholtz Center for Information Security}
}

\author{Ahmed Salem}
\affiliation{%
  \institution{CISPA Helmholtz Center for Information Security}
}

\author{Max Augustin}
\affiliation{%
  \institution{Max Planck Institute for \\ Informatics}
}

\author{Michael Backes}
\affiliation{%
  \institution{CISPA Helmholtz Center for Information Security}
}

\author{Mario Fritz}
\affiliation{%
  \institution{CISPA Helmholtz Center for Information Security}
}

\keywords{Machine learning as a service, secure enclave, offline deployment, defense for machine learning models}

\begin{abstract}
With the widespread use of machine learning (ML) techniques, ML as a service has become increasingly popular. In this setting, an ML model resides on a server and users can query it with their data via an API. However, if the user's input is sensitive, sending it to the server is undesirable and sometimes even legally not possible. Equally, the service provider does not want to share the model by sending it to the client for protecting its intellectual property and pay-per-query business model.

In this paper, we propose \system, a guarded offline deployment of machine learning as a service. \system executes the model locally on the user's side and therefore the data never leaves the client. Meanwhile, \system offers the service provider the same level of control and security of its model as the commonly used server-side execution. In addition, \system is applicable to offline applications that require local execution. Beyond protecting against direct model access, we couple the secure offline deployment with defenses against advanced attacks on machine learning models such as model stealing, reverse engineering, and membership inference.
\end{abstract}

\maketitle

\section{Introduction}
Machine learning as a service (MLaaS) has become increasingly popular
during the past five years.
Leading Internet companies, such as Google,\footnote{\url{https://cloud.google.com/ml-engine/}}
Amazon,\footnote{\url{https://aws.amazon.com/machine-learning/}}
and Microsoft\footnote{\url{https://azure.microsoft.com/en-us/services/machine-learning-studio/}}
have deployed their own MLaaS.
It offers a convenient way for a service provider to deploy a machine learning (ML) model and equally an instant way for a user/client to make use of the model in various applications.
Such setups range from image analysis over translation
to applications in the business domain.

While MLaaS is convenient for the user, it also comes with several limitations. First, the user has to trust the service provider with the input data. Typically, there are no means of ensuring data privacy and recently proposed encryption mechanisms~\cite{BPTG15} come at substantial computational overhead
especially for state-of-the-art deep learning models containing millions of parameters.
Moreover, MLaaS requires data transfer over the network
which constitutes to high volume communication and provides new attack surface~\cite{MSCS18,OOSF18}.
This motivates us to come up with a client-side solution such that perfect data privacy and offline computation can be achieved.

As a consequence, this (seemingly) comes with a loss of control of the service provider, as the ML model has to be transfered and executed on the client's machine. This raises concerns about revealing details of the model or granting unrestricted access to the user.
The former damages the intellectual property of the service provider,
while the latter breaks the commonly enforced pay-per-query business model.
Moreover, there is a broad range of attack vectors on ML models that raise severe security and privacy risks~\cite{PMSW18}. A series of recent papers have shown different attacks on MLaaS that can lead to reverse engineering
\cite{TZJRR16,OASF18} and training data leakage \cite{FLJLPR14,FJR15,SSSS17,YGFJ18,SZHBFB19}. 
Many of these threats are facilitated by repeated probing of the ML model that the service provider wants to protect against. 
Therefore, we need a mechanism to enforce that the service provider remains in control of the model access as well as provide ways to deploy defense mechanisms to protect the model.

\subsection{Our Contributions}
We propose \system, a guarded offline deployment of machine learning as a service. 
\system follows the popular MLaaS paradigm, but allows for client-side execution where model and computation remain secret.
With~\system, the service provider controls its ML model 
which allows for intellectual property protection and business model maintenance.
Meanwhile, the user gains perfect data privacy and  offline execution, as the data never leaves the client and the protocol is transparent

We assume that the client's platform has access to an Isolated Execution Environment (IEE). \system uses it to provide a secure enclave to run an ML model classification. Moreover, since IEE provides means to prove execution of code, the service provider is assured that the secrets that it sends in encrypted form can only be decrypted by the enclave. This keeps this data secure from other processes running on the client's platform.

To support security arguments about \system, we propose the first formal model for reasoning about the security of local ML model deployment. The leading idea of our model is a property called \emph{ML model secrecy}. This definition ensures that the client can simulate \system using only a server-side API. In consequence, this means that if the client is able to perform an attack against \system, the same attack can be performed on the server-side API.

We also contribute by a proof-of-concept of our solution.
Due to its simplicity and availability we implemented our prototype
on a platform with Intel SGX, despite the fact that the current generation should not
be used due to a devastating attack (see~\cite{BMWGKPSWYS18}).
Note that our solution can be used on any IEE platform for which we can
argue that it implements the abstract requirements defined in section~\ref{sec:secana}.

In more details, in our solution we design so called \system layers, 
which encapsulate standard ML layers and are executed inside the IEE. Those layers 
are able to decrypt (unseal) the secret weight provisioned by the service provider and perform the computation in isolation. 
This modular approach makes it easy to combine layers and form large networks.
For instance, we implement and evaluate the VGG-16~\cite{SZ14} and MobileNet~\cite{HZCKWWAA17} neural networks. In addition, we provide an evaluation of convolution and dense layers and compare the execution time inside the IEE to a standard implementation.

The isolated code execution on the client's platform renders \system
ability to integrate advanced defense mechanism for attacks against machine learning models. 
For demonstration, we propose two defense mechanisms 
against reverse engineering~\cite{OASF18} and membership inference~\cite{SSSS17,SZHBFB19}, respectively,
and utilize a recent proposed defense~\cite{JSDMA18} for model stealing attacks~\cite{TZJRR16}.
We show that these mechanisms can be seamlessly incorporated into~\system,
with a negligible computation overhead,
which further demonstrates the efficacy of our system.

\subsection{Organization}
\autoref{sec:requirement} presents the requirements of
\system. We provide the necessary technical background
in \autoref{sec:back} 
and \autoref{sec:related} summarizes the related work in the field.
In~\autoref{sec:sysdesign}, 
we present \system in detail
and formally prove its security in~\autoref{sec:secana}.
\autoref{sec:implement} discusses the implementation and evaluation of~\system.
We show how to incorporate advanced defense mechanisms in~\autoref{sec:advdef}.
\autoref{sec:discussion} provides a discussion, and the paper is concluded in~\autoref{sec:conclusion}.
\section{Requirements and Threat Model}
\label{sec:requirement}

In this section, 
we introduce security requirements we want to achieve in \system.

\subsection{Model Secrecy and Data Privacy}
\label{sec:basic_requirement}

\noindent\textbf{User Side.}
\system deploys MLaaS locally.
This provides strong privacy guarantees to a user, as her data never leaves her device. Meanwhile, executing machine learning prediction locally avoids the Internet communication between the user and the service provider. 
Therefore, possible attacks due to network communication~\cite{MSCS18,OOSF18} are automatically eliminated. 

\medskip
\noindent\textbf{Server Side.}
Deploying a machine learning model on the client side naively, i.e., providing the trained model to the user as a white box, harms the service provider in the following two perspectives.

\begin{itemize}
\item \emph{Intellectual Property.}
Training an effective machine learning model is challenging, the MLaaS provider needs to get suitable training data 
and spend a large amount of efforts for training the model and tuning various hyperparameters~\cite{WG18}. 
All these certainly belong to the intellectual property of the service provider 
and providing the trained model to the client as a white box will result in the service provider completely losing these valuable information. In this paper we consider the ML model architecture public and only consider the model parameters as private information. 
However, our approach can easily be extended to also protect the model architecture by using tools that protect the
privacy of the code executed inside the IEE (e.g. using~\cite{BWZL18}).
\item \emph{Pay-per-query.}
Almost all MLaaS providers implement the pay-per-query business model. For instance, Google's vision API charges 1.5 USD per 1,000 queries.\footnote{\url{https://cloud.google.com/vision/pricing}}
Deploying a machine learning model at the client side naturally grants a user unlimited number of queries, which breaks the pay-per-query business model.
\end{itemize}

\medskip
To mitigate all these potential damages to the service provider, 
\system needs to provide the following guarantees:
\begin{itemize}
\item Protecting intellectual property
\item Enable the pay-per-query business model
\end{itemize}
In a more general way, we aim for a client-side deployment being indistinguishable from the current server-side deployment.

\subsection{Protection against Advanced Attacks}
Several recent works show that an adversary 
can perform multiple attacks against MLaaS by solely querying its API (black-box access). 
Attacks of such kind include model stealing~\cite{TZJRR16,WG18}, 
reverse engineering~\cite{OASF18}, and membership inference~\cite{SSSS17,SZHBFB19}.
These attacks are however orthogonal to the damages discussed in~\autoref{sec:basic_requirement}, 
as they only need black-box access to the ML model instead of white-box access.
More importantly, it has been shown that
the current MLaaS cannot prevent against these attacks neither~\cite{TZJRR16,SSSS17,OASF18,WG18}.

We consider mitigating these threats as the requirements of \system as well.
Therefore, we propose defense mechanisms against these advanced attacks and show that these mechanisms
can be seamlessly integrated into \system.

\section{Background}
\label{sec:back}

In this section, we focus on the properties of Intel's IEE implementation called Software Guard Extensions (SGX) 
and recall a formal definition of Attested Execution proposed by Fisch et al.~\cite{FVBG17}. We would like to 
stress that \system works with any IEE that implements this abstraction.
We also formalize a public key encryption scheme, which we will use for the concrete instantiation of our system. We stress 

\subsection{SGX}
SGX is a set of commands included in Intel's x86 processor design that allows to create isolated execution environments called enclaves. According to Intel's threat model, enclaves are designed to trustworthily execute programs and handle secrets even if the host system is malicious and the system's memory is untrusted. 

\medskip
\noindent\textbf{Properties. }
There are three main properties of Intel SGX.

\begin{itemize}
\item \emph{Isolation.} Code and data inside the enclave's protected memory cannot be read or modified by any external process. Enclaves are stored in a hardware guarded memory called Enclave Page Cache (EPC), which is currently limited to 128 MB with only 90 MB for the application. Untrusted applications can execute code inside the enclave using entry points called
Enclave Interface Functions \verb+ECALL+s, i.e., untrusted applications can use enclaves as external libraries that are defined by these call functions.
\item \emph{Sealing.} Data stored in the host system is encrypted and authenticated using a hardware-resident key. Every SGX-enabled processor has a special key called Root Seal Key that can be used to derive a so called Seal Key which is specific to the identity of the enclave. This key can then be used to encrypt/decrypt data which can later be stored in untrusted memory. One important feature is that the same enclave can always recover the Seal Key if instantiated on the same platform, however it cannot be derived by other enclaves.
\item \emph{Attestation.} Attestation provides an unforgeable report attesting to code, static data and meta data of an enclave, as well as the output of the performed computation. Attestation can be local and remote. In the first case, one enclave can derive a shared Report Key using the Root Seal Key and create a report consisting of a Message Authentication Code (MAC) over the input data. This report can be verified by a different enclave inside the same platform, since it can also derive the shared Report Key. In case of remote attestation, the actual report for the third party is generated by a so called Quoting Enclave that uses an anonymous group signature scheme (Intel Enhanced Privacy ID~\cite{BL10}) to sign the data.
\end{itemize}

\medskip
\noindent\textbf{Side-channel Attacks. }
Due to its design, Intel SGX is prone to side-channel attacks.
This includes physical attacks (e.g., power analysis), yet successful attacks have not yet been demonstrated.
On the other hand, several software attacks have been demonstrated in
numerous papers~\cite{LSGKKP17,WCPZWBTG17,BMDKCS17}. An attack specifically against secure ML implementations was
presented by Hua et al. \cite{BWZL18}.
Those kinds of attacks usually target flawed implementations and a knowledgeable programmer can write the code in a data-oblivious way, i.e., the software does not have memory access patterns or control flow branches that depend on secret data. In particular, those attacks are not inherent to SGX-like systems~\cite{CLD16}.
Recently, Bulck et al. \cite{BMWGKPSWYS18} presented a devastating attack on SGX that compromises the whole system making
the current generation of the SGX technology useless. Even though the current SGX generation should not be used in practice, 
future instantiations should provide a real-world implementation of the abstract security requirements needed to secure \system.  

\medskip
\noindent\textbf{Rollback.}
The formal model described in the next subsection assumes that the state of the hardware is hidden from the users platform. SGX enclaves 
store encryptions of the enclave's state in the untrusted part of the platform. 
Those encryptions are protected using a hardware-generated secret key, yet this data is provided to the enclave by an untrusted application. Therefore, SGX does not provide any guarantees about freshness of the state and is vulnerable to rollback attacks. Fortunately, there exist hardware solutions relying on counters \cite{SP16}
and distributed software-based strategies \cite{MAKDSGJC17} that can be used to prevent rollback attacks.

\subsection{Definition for SGX-like Hardware}
There are many papers that discuss hardware security models in a formalized way. The general consensus is that those abstractions are useful to formally argue about the security of the system. 

Barbosa et al.~\cite{BPSW16} define a generalized ideal interface to represent SGX-like systems that perform attested computation. A similar model was proposed by Fisch et al. \cite{FVBG17} but was designed specifically to abstract Intel's SGX and support local and remote attestation. Pass, Shi, and Tram{\`{e}}r \cite{PST17} proposed an abstraction of attested execution in the universal composability (UC) model. In this paper we will focus on the formal hardware model by Fisch et al.~\cite{FVBG17}. We decided to use this particular model because it was specifically defined to abstract the features that are supported by SGX which is the hardware used by our implementation. However, since we will only use remote attestation in our instantiation, we omit the local attestation part and refer the reader to the original paper for a full definition. 

Informally, this ideal functionality allows a registered party to install a program inside an enclave, which can then be resumed on any given input. An instance of this enclave possesses internal memory that is hidden from the registering party. However, the main property of attested execution is that the enclave creates an attestation of execution. This attestation provides a proof for third parties that the program was executed on a given input yielding a particular output.

\medskip
\noindent\textbf{Formal Definition. }
We define a secure hardware as follows.
\begin{definition}
A secure hardware functionality $\HW$ for a class of probabilistic polynomial time programs $Q$ consists of the following interface: $\HWSetup$, $\HWLoad$, $\HWRun$, $\HWRunQuote$, $\HWQuoteVerify$. $\HW$ has also an internal state $\state$ that consists of a variable $\HWsk$ and a table $\HWT$ consisting of enclave state tuples indexed by enclave handles. The variable $\HWsk$ will be used to store signing keys and table $\HWT$ will be used to manage the state of the loaded enclave.

\begin{itemize}
\item $\HWSetup(\secparam)$: given input security parameter $\secparam$, it generates the secret key ${\sf sk_{quote}}$ and stores it in $\HWsk$. It also generates and outputs public parameters $\params$.
\item $\HWLoad(\params,Q)$: given input global parameters $\params$ and program $Q$ it first creates an enclave, loads $Q$, and then generates a handle $\hdl$ that will be used to identify the enclave running $Q$. Finally, it sets $\HWT[\hdl] = \emptyset$ and outputs $\hdl$.
\item $\HWRun(\hdl,\HWin)$: it runs $Q$ at state $\HWT[\hdl]$ on input $\HWin$ and records the output $\HWout$. It sets $\HWT[\hdl]$ to be the updated state of $Q$ and outputs $\HWout$.
\item $\HWRunQuote(\hdl,\HWin)$: executes a program in an enclave similar to $\HWRun$ but additionally outputs an attestation that can be publicly verified. 
The algorithm first executes $Q$ on $\HWin$ to get $\HWout$, and updates $\HWT[\hdl]$ accordingly. Finally, it outputs the tuple $\HWquote = (\HWmeta,\HWtag,\HWin,\HWout,\sig)$: $\HWmeta$ is the metadata associated with the enclave, $\HWtag$ is a program tag for $Q$ and $\sig$ is a signature on $(\HWmeta,\HWtag,\HWin,\HWout)$.
\item $\HWQuoteVerify(\params,\HWquote)$: given the input global parameters $\params$ and $\HWquote$ this algorithm outputs $1$ it the signature verification of $\sig$ succeeds. It outputs $0$ otherwise.
\end{itemize}

\medskip
\noindent\textbf{Correctness.} 
A $\HW$ scheme is correct if the following holds.
For all $\aux$, all programs $Q$, all $\HWin$ in the input domain of $Q$ and all handles $\hdl$ we have:
\begin{itemize}
\item if there exist random coins $r$ (sampled in run time and used by $Q$) such that $\HWout = Q(\HWin)$, then
\item $\Pr[\HWQuoteVerify(\params,\HWquote) = 0] = \negl(\lambda)$, where 
$\HWquote = \HWRunQuote(\hdl,\HWin)$.
\end{itemize}

\medskip
\noindent Remote attestation unforgeability is modeled by a game between a challenger $C$ and an adversary $\Adv$.
\begin{enumerate}
\item $\Adv$ provides an $\aux$.

\item $C$ runs $\HWSetup(\secparam,\aux)$ in order to obtain public parameters $\params$, secret key ${\sf sk_{quote}}$ and an initialization string $\state$. It gives $\params$ to $\Adv$, and keeps ${\sf sk_{quote}}$ and $\state$ secret in the secure hardware.

\item $C$ initialized a list $\query = \{\}$.

\item $\Adv$ can run $\HWLoad$ on any input $(\params,Q)$ of its choice and get back $\hdl$.

\item $\Adv$ can also run $\HWRunQuote$ on input $(\hdl,\HWin)$ of its choice and get $\HWquote = (\HWmeta,\HWtag,\HWin,\HWout,\sig)$, where the challenger puts the tuple $(\HWmeta,\HWtag,\HWin,\HWout)$ into $\query$.

\item $\Adv$ finally outputs $\HWquote^* = (\HWmeta^*,\HWtag^*,\HWin^*,\HWout^*,\sig^*)$.
\end{enumerate}

$\Adv$ wins the above game if $\HWQuoteVerify(\params,\HWquote^*) = 1$ and $(\HWmeta^*,  \HWtag^*,\HWin^*\allowbreak ,\HWout^*) \not\in \query$.
The hardware model is remote attestation unforgeable if no adversary can win this game with non-negligible probability.
\end{definition}

\subsection{Public Key Encryption}

\begin{definition}
Given a plaintext space $\M$ we define a public key encryption scheme as a tuple of probabilistic polynomial time algorithms:
\begin{itemize}
\item $\KeyGen(\secparam):$ on input security parameters, this algorithm outputs the secret key $\sk$ and public key $\pk$.
\item $\Enc(\pk,m):$ on input public key $\pk$ and message $m \in \M$, this algorithm outputs a ciphertext $c$.
\item $\Dec(\sk,c):$ on input secret key $\sk$ and ciphertext $c$, this algorithm outputs message $m$.
\end{itemize}

\medskip
\noindent\textbf{Correctness.} A public key encryption scheme is correct if for all
security parameters $\secparam$, all messages $m \in \M$ and all keypairs $(\sk,\pk) = \KeyGen(\secparam)$ we have $\Dec(\sk,\Enc(\pk,m)) = m$.

\medskip
\noindent\textbf{Ciphertext Indistinguishability Against CPA (Chosen Plaintext Attack).} 
\begin{figure}
  \begin{pchstack}[center]
    \procedure{$\ExpIndCPA(\lambda)$}{%
      (\sk,\pk) = \KeyGen(\secparam)\\
      (m_0,m_1) \exec \Adv(\pk); \; b \rexec \{0,1\}\\
      c \exec \Enc(\pk,m_b); \; \hat{b} \exec \Adv(\pk,c) \\
      \pcreturn \hat{b} = b
    }
    \end{pchstack}
  \caption{IND-CPA - security experiment.} \label{fig:indcpa}
\end{figure}
We say that the public key encryption scheme is indistinguishable against chosen plaintext attacks if there exists no adversary for which $\lvert \Pr[\ExpIndCPA = 1] - \frac12 \rvert$ is non-negligible, where experiment $\ExpIndCPA$ is defined in \autoref{fig:indcpa}.
\end{definition}
\section{Related Work}
\label{sec:related}
In this section,
we review related works in the literature. We start by discussing cryptography and ML, concentrating on SGX and homomorphic encryption. We then turn to mechanisms using watermarking to passively protect ML models.
In the end, we describe several relevant attacks against ML models including model stealing, reverse engineering, and membership inference.

\medskip
\noindent\textbf{SGX for ML.} Ohrimenko et al.~\cite{OSFMNVC16} investigated oblivious data access patterns for a range of ML algorithms applied in SGX. Their work focuses on neural networks, support vector machines and decision trees. Additionally,  Hynes et al.~\cite{HCS18}, and Hunt et al.~\cite{HSSSW18} used SGX for ML -- however in the context of training convolutional neural network models. Tram{\`{e}}r et al.~\cite{TB18} introduced Slalom, a system which distributes parts of the computation from SGX to a GPU in order to preserve integrity and privacy of the data. 
Gu et al.~\cite{GHZSLPM18} consider convolutional neural networks and SGX at test time. They propose to split the network, where the first layers are executed in an SGX enclave, and the latter part outside the enclave. The core of their work is to split the network as to prevent the user's input to be reconstructed. 
Compared to \cite{TB18} and \cite{GHZSLPM18}, \system is the first secure and privacy-preserving machine learning system for offline deployment coupled with several defense mechanisms addressing state-of-the-art blackbox-based attacks. 

\medskip
\noindent\textbf{Cryptographic Solutions for ML Models.} Homomorphic encryption has been used to keep both input and result private from an executing server \cite{AB06,BPTG15,DGLLNW16,MZ17,LJLA17}. 
In contrast to our approach, however, homomorphic encryption lacks efficiency, 
especially for deep learning models containing millions of parameters.
Moreover, it does not
allows for implementing transparent mechanisms to defend attacks on the ML model: the user cannot be sure which code is executed. Furthermore, a server-side model protection might require further computations on the encrypted user data and sometimes
also a function of this data in plaintext (e.g. to decide whether to accept a query). 
Lastly, cryptographic based solutions usually do not achieve what we call perfect-privacy (even an unbounded adversary cannot
retrieve any information about the user's data), which is one of the main advantages of \system.

\medskip
\noindent\textbf{Watermarking ML Models.}
Recently, watermarking has been introduced to claim ownership of a trained ML model. Adi et al.~\cite{ABCPK18} propose to watermark a model by training it to yield a particular output for a number of points, thereby enabling identification. Zhang et al.~\cite{ZGJWSHM18}, in contrast, add meaningful content to samples used as watermarks. Watermarking as a passive defense mechanism is opposed to our work in two perspectives.
First, \system deployed on the client-side is indistinguishable from the server-side deployment.
Second, \system allows us to deploy defense mechanisms to actively mitigate advanced attacks against ML models.

\medskip
\noindent\textbf{Model Stealing and Reverse Engineering.} 
Model stealing has been introduced by Tram{\`{e}}r et al.~\cite{TZJRR16}. 
The attacker's goal here is to duplicate a model that is accessible only via an API. Recently, a defense to model stealing called Prada has been proposed by Juuti et al.~\cite{JSDMA18}. We will evaluate Prada's feasibility in \system. Also passive defenses to model stealing have been proposed under the term of watermarking, for example by Adi et al.~\cite{ABCPK18} and Zhang et al.~\cite{ZGJWSHM18}. \system, however, allows to actively mitigate model stealing attacks.
A different line of work aims to infer specific details of the model such as architecture, training method and type of non-linearities, rather than its behavior. Such attacks are called reverse engineering and were proposed by Oh et al.~\cite{OASF18}. We propose a defense against this attack and also show it can be integrated in \system with little overhead.

\medskip
\noindent\textbf{Membership Inference.}
In the setting of membership inference, the attacker aims to decide whether a data point
is in the training set of the target ML model.
Shokri et al.\ are among the first to perform effective membership inference against ML models~\cite{SSSS17}.
Recently, several other attacks have been proposed~\cite{YGFJ18,HMDC17,SZHBFB19}.
In this paper, we propose a defense mechanism
to mitigate the membership privacy risks which can be easily implemented
in~\system.

\section{\system} \label{sec:sysdesign}

In existing MLaaS system, the service provider gives the user access to an API, which she can then use for training and classification. This is a classic client-server scenario, where the service is on the server's side and the user is transferring the data. 
We focus on the scenario where the service provider equips the user with an already trained model and classification is done on the client's machine. 
In this scenario, the trained model is part of the service. This introduces the previously discussed problems, like for example protecting the IP of the service provider or the user's privacy. %, as previously discussed. 
In this section we introduce \system to tackle those issues and then argue how the requirements are met. We would like to stress that the main advantage of \system is its simplicity.

\subsection{Overview}
We start with an overview of the participants and then introduce \system with its different execution phases.

\medskip
\noindent\textbf{Participants.} 
In \system, we distinguish two participants of the system. On the one hand, we have the service provider (SP) that possesses private training data that it uses to train an ML model. On the other hand, we have users that want to use the pre-trained model as previously discussed in~\autoref{sec:requirement}.

We focus for the rest of the paper on deep  networks, as they have recently drawn attention due to their good performance. Additionally, their size make both implementation and design a challenge. Yet, \system generalizes to other linear models, as these can be expressed as a single layer of a neural network.

We consider the design of the applied network to be publicly known. The service provider's objective is to protect the trained weights and biases of all layers. 

\medskip
\noindent\textbf{Approach.}
We now describe \system, which fulfills the requirements from \autoref{sec:requirement}.
To start, we assume that all users have a platform with an isolated execution environment. Note that this is a viable assumption: Intel introduced SGX with the Skylake processor line. Additionally, the latest Kaby Lake generation of processors supports SGX. It is further reasonable to assume that Intel will provide support for SGX with every new processor generation and over time every PC with an Intel processor will have SGX.

The core idea of \system is to leverage the properties of the IEE to ensure that the user has to attest that the code is running in isolation before it is provided the secrets by the SP. This step is called the \emph{setup phase} and
is depicted in \autoref{fig:scheme}.
\begin{figure} 
\includegraphics[width=\columnwidth]{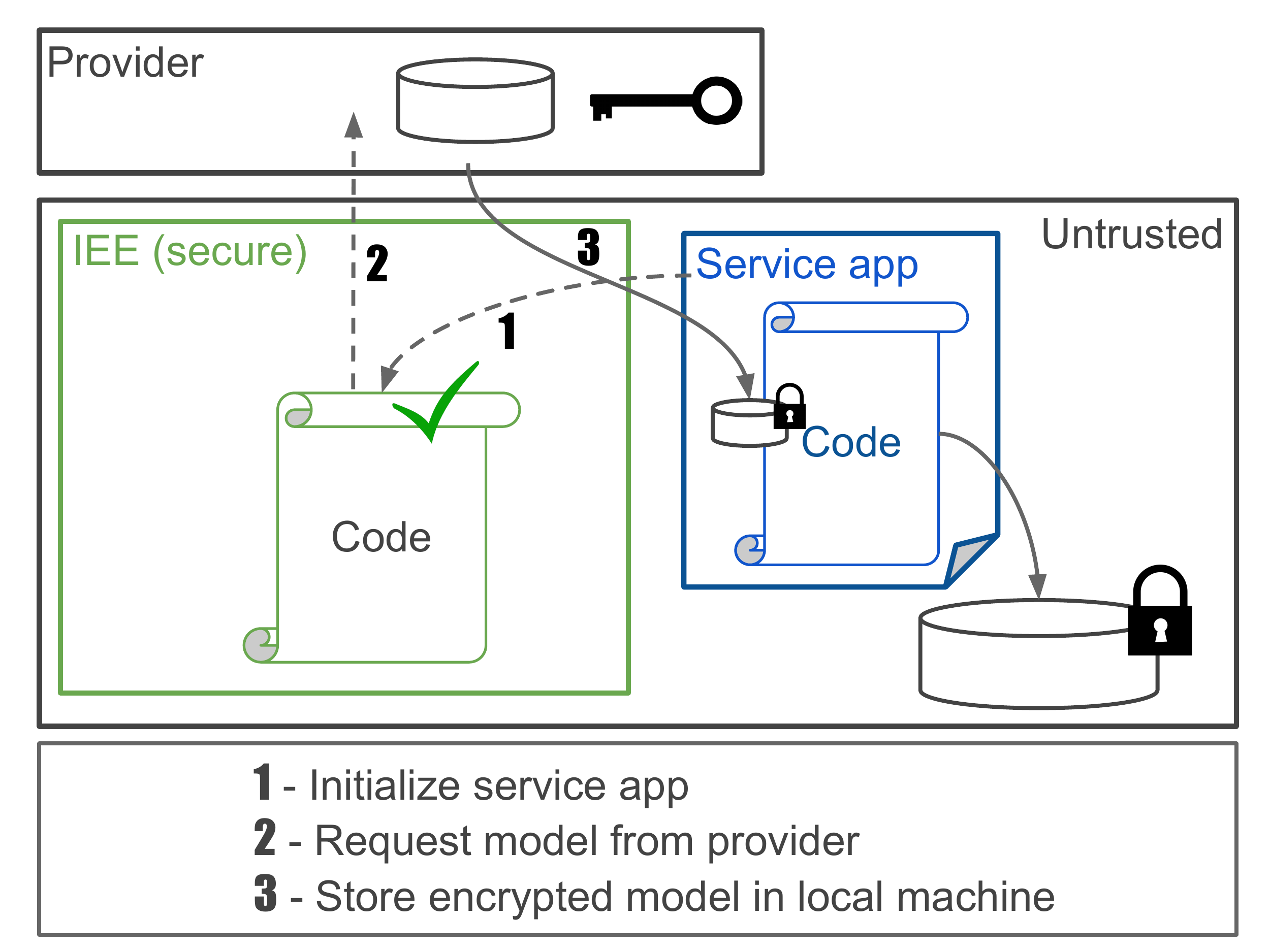}
\caption{Our scheme with all steps of initialization.}\label{fig:scheme}
\end{figure}
Once the setup is done, the client can use the enclave for the classification task.
This step is called the \emph{inference phase} and
is depicted in \autoref{fig:offlineClassification}.
\begin{figure} 
\includegraphics[width=0.97\columnwidth]{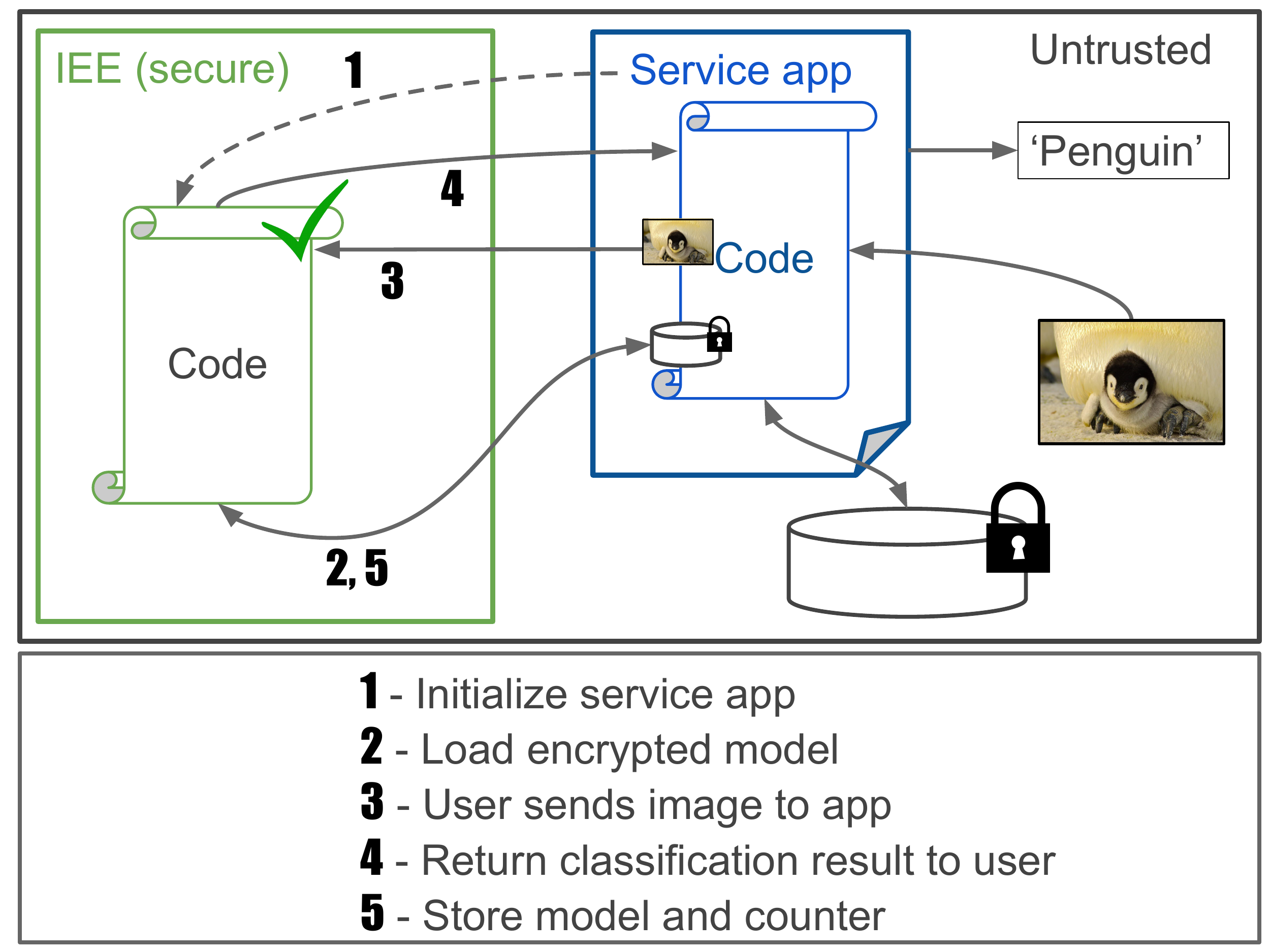}
\caption{Our scheme with all steps of offline classification.}\label{fig:offlineClassification}
\end{figure}
The isolation of the enclave ensures that the user is not able to infer more information about the model than given API access to a server-side model. 
Next, we describe \system in more details.

\medskip
\noindent\emph{Setup Phase.} The user and the service provider interact to set up a working enclave on the users platform. The user has to download the enclave's code and the code of a service application that will set up the enclave on the user's side. The enclave's code is provided by the SP, where the code of the service app can be provided by the SP but it can also be implemented by the user or any third party that she trusts. Note that the user can freely inspect the enclave's code and check what data is exchanged with the SP. This inspection ensures that the user's input to the ML model will be hidden from the SP, as the classification is run locally and the input data never leaves the user's platform.

After the user's platform attests that an enclave instance is running, the SP provides the enclave with secret data composed of among others the weights of the network. 
Finally, the enclave seals all secrets provided by the service provider and the service application stores it for further use. Sealing hides this data for other processes on the user's platform. With the end of the setup phase, no further communication with the SP is needed. 

\medskip
\noindent\emph{Inference Phase.} 
To perform classification, the user executes the service app and provides the test data as input. The service app restores the enclave, which can now be used to perform classification. Since the enclave requires the model parameters, the service app loads the sealed data stored during the setup phase. Before classification, the enclave can also perform an access control procedure that is based on the user's input data (available to the enclave in plaintext) and the current state of the enclave. 
Due to some limitations (e.g. limited memory of the IEE), the enclave can be implemented in a way that classification is performed layer wise, i.e. the service app provides sealed data for each layer of the network separately.
In the end, the enclave outputs the result to the service app and might as well update its state, which is stored inside the sealed data. This process is depicted in \autoref{fig:offlineClassification}.

\subsection{Discussion on Requirements}
Next, we discuss how \system fulfills the requirements stated in \autoref{sec:requirement}.

\medskip
\noindent\textbf{User Input Privacy.}
\system is executed locally by the user. Moreover, the user is allowed to inspect the code executed in the secure hardware. This means that she can check for any communication command with the SP and stop execution of the program. 
Moreover, off-line local execution ensures that the user's input data is private because there is no communication required with the SP. We conclude that \system perfectly protects the user input. 

\medskip
\noindent\textbf{Pay-per-query.}
To enforce the pay-per-query paradigm, the enclave will be set up during provision with a threshold. Moreover, the enclave will store a counter that is increased with every classification performed. Before performing classification, it is checked whether the counter exceeds this threshold. In case it does, the enclave returns an error instead of the classification. Otherwise, the enclave works normally.

This solution ensures that the user cannot exceed the threshold, which
means that she can only query the model for the number of times she paid. Unfortunately, it does not allow for a fine-grained pay-per-query, where the user can freely chose if she wants more queries at a given time. On the other hand, this is also the model currently used by server-side MLaaS, where a user pays for a fixed number of queries (e.g. 1000 in case of Google's vision API). \system can also be configured in a way that to perform offline inference the user has to receive a signature of the service provider under some function of the queried data, where this value is send in the clear by the user. 
The function has be to collision-resistant, so that the user cannot use the signature for two or more queries, and 
lossy, so that this value loses most of the information of the preimage. A practical candidate are cryptographic hash functions.
This configuration allow for a more fine-grained access to model but at the same time provides similar security guarantees to the
offline mode.

\medskip
\noindent\textbf{Intellectual Property.}
\system protects the service provider's intellectual property by ensuring that the isolation provided by the hardware simulates in a way the black-box access in the standard server-side model of MLaaS. In other words, the user gains no additional advantage in stealing the intellectual property in comparison to an access to the model through an server-side API. 
\section{Security Analysis}
\label{sec:secana}

In this section, we introduce a formal model for \system and show a concrete instantiation using the abstracted hardware model by Fisch et al. \cite{FVBG17} and a standard public key encryption scheme, which we recalled in \autoref{sec:back}. The goal is to prove that our construction possesses a property that we call ML model secrecy. 
We define it as a game that is played between the challenger and the adversary. The challenger chooses a bit which indicates whether the adversary is interacting with the real system or with a simulator. This simulator gets as input the same information as the adversary. In a nutshell, this means that the user can simulate \system using a server-side API ML model. Hence, classification using an IEE should not give the user more information about the model than using an oracle access to the model (e.g. as it is the case for a server-side API access). We would like to stress that this high-level view on security is currently the best we can provide. Any definition that tries to somehow quantify security of the model by bounding the leakage of model queries has to be based on empirical data. Thus, we circumvent this by stating that \system does not leak more information about the model than a model accessed by a server-side API.

We begin the description of the model by binding it with the high level description presented in \autoref{sec:sysdesign} using three algorithms that constitute the
interactive setup phase ($\Train$, $\Obtain$ and $\Provide$) and the local inference phase ($\Classify$).  
An overview of how \system fits into this composition and details about the instantiation are given in \autoref{fig:formal}.
Moreover, in the next subsection we describe the security model in more details and then present our formal instantiation of \system.

\begin{figure*}[t]
\centering
\includegraphics[width=1.8\columnwidth]{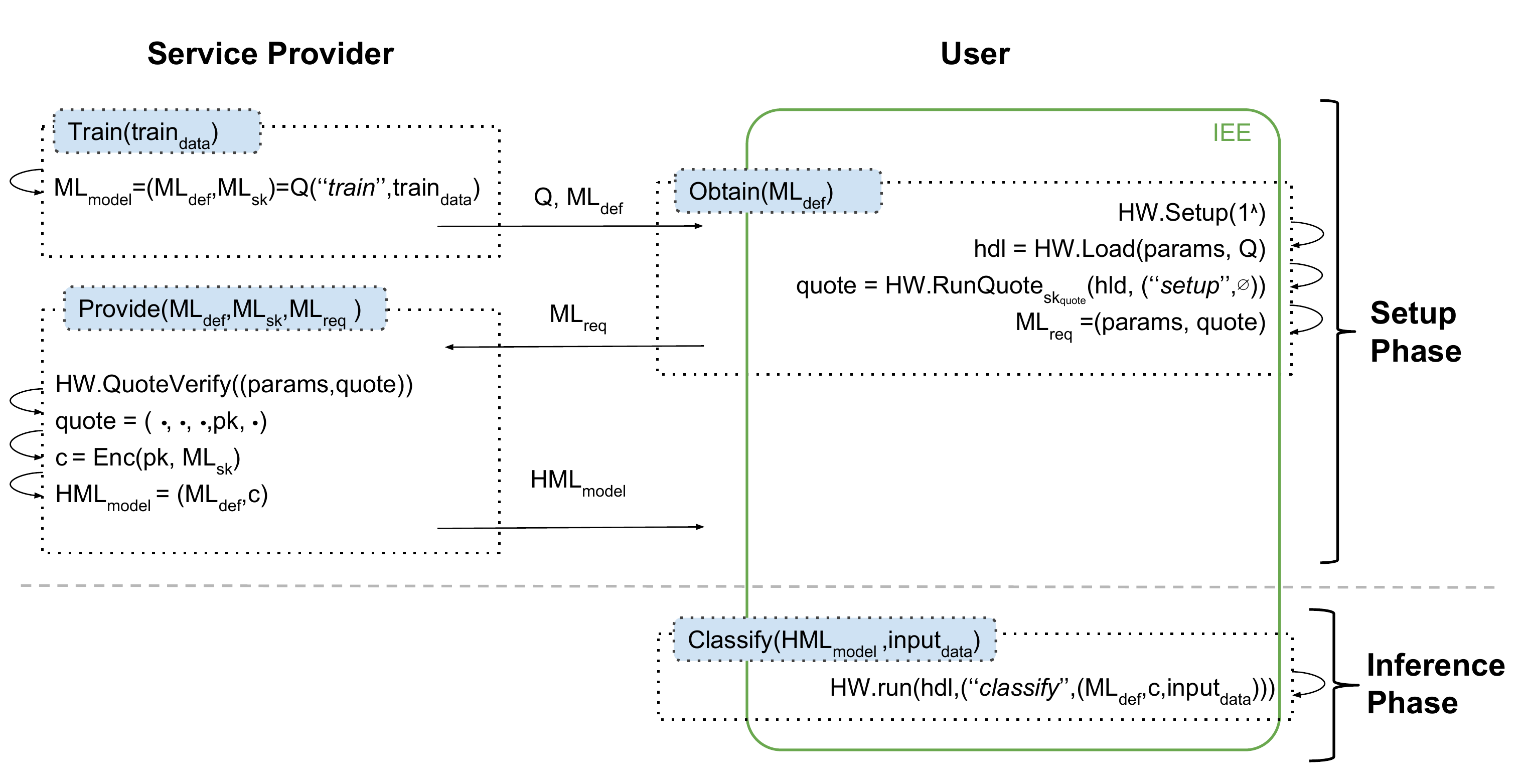}
\caption{Formal instantiation of $\system$.}\label{fig:formal}
\end{figure*}

\subsection{Formalization of \system}
We now define in more detail the inputs and outputs of the four algorithms that constitute the model for \system. Afterwards, we show what it means that \system is correct and show a game based definition of ML model secrecy.

\begin{itemize}
\item $\Train(\tdata)$: this probabilistic algorithm is executed by the service provider in order to create a machine learning model $\MLmodel = (\MLdef,\MLsk)$ based on training data $\tdata$, where $\MLdef$ is the definition of the ML model and $\MLsk$ are the secret weights and biases. To obtain the classification $\OutData$ for a given input data $\InData$ one can execute $\OutData = \MLmodel(\InData)$. 

\item $\Obtain(\MLdef)$: this algorithm is executed by the user to create a request $\MLreq$ to use model with definition $\MLdef$. Further, this algorithm is part of the setup phase and shown in steps 1 and 2 in \autoref{fig:scheme}.

\item $\Provide(\MLdef,\MLsk,\MLreq)$: this probabilistic algorithm is executed by the service provider to create a hidden ML model $\HMLmodel$ based on the request $\MLreq$.
\autoref{fig:scheme} depicts this algorithm in step 3 of the setup phase.

\item $\Classify(\HMLmodel,\InData)$: this algorithm is executed by the user to receive the output $\OutData$ of the classification. Hence, it models the inference phase depicted in \autoref{fig:offlineClassification}.
\end{itemize}

\medskip
\noindent\textbf{Correctness.} We say that \system is correct if for all training data $\tdata$, all ML models $(\MLdef,\MLsk) = \Train(\tdata)$,
all input data $\InData$ and all requests $\MLreq = \Obtain(\MLdef)$ we have
$\MLmodel(\InData) = \Classify(\HMLmodel,\InData)$, where 
$\HMLmodel = \Provide(\MLdef,\MLsk,\MLreq)$.

\medskip
\noindent\textbf{ML Model Secrecy.} We define model secrecy as a game played between a challenger $C$ and an adversary $\Adv$. Depending on the bit chosen by the 
challenger, the adversary interacts with the real system or a simulation. More formally, we say that \system is ML model secure if there exists a simulator $\Sim = (\Sim_1,\Sim_2)$ such that the probability that $\lvert \Pr[\ExpModelSec{-0}(\lambda) = 1] - \Pr[\ExpModelSec{-1}(\lambda) = 1] \rvert$ is negligible for any probabilistic polynomial time adversary $\Adv$.

\subsection{Instantiation of \system} \label{subsec:system}

The idea behind our instantiation is as follows. The user retrieves a program $Q$ from the SP and executes it inside a secure hardware.
This secure hardware outputs a public key and an attestation that the code was correctly executed. This data is then send to the SP, which encrypts the secrets corresponding to the ML model and sends it back to the user. This ciphertext is actually the hidden machine learning model, which is decrypted by the hardware and the plaintext is used inside the hardware for classification. ML model secrecy follows from the fact that the user cannot produce forged attestations without running program $Q$ in isolation. This also means that the public key is generated by the hardware and due to indistinguishability of chosen plaintext of the encryption scheme, we can replace the ciphertext with an encryption of $0$ and answer hardware calls using the model directly and not the $\Classify$ algorithm. Below we present this idea in more details.

\begin{algorithm}[t]
 \caption{Hardware Program $Q$}
 \begin{algorithmic}[1]
 \renewcommand{\algorithmicrequire}{\textbf{Input:}}
 \renewcommand{\algorithmicensure}{\textbf{Output:}}
 \REQUIRE \command, \data
 \ENSURE  \out
  \IF {\command == "train"}
  \STATE run the ML training on $\data$ receiving $\MLmodel = (\MLdef,\MLsk)$,
  store $\MLmodel$ and set $\out = \MLmodel$
  \ELSIF {\command == "setup"}
  \STATE execute $(\sk,\pk) = \KeyGen(\secparam)$, store $\sk$ and set $\out = \pk$
  \ELSIF {\command == "classify"}
  \STATE parse $\data = (\MLdef,c,\InData)$ and execute $\Dec(\sk,c) = \MLsk$, set $\MLmodel = (\MLdef,\MLsk)$ and $\out = \MLmodel(\InData)$
  \ENDIF
 \RETURN $\out$
 \end{algorithmic}
 \end{algorithm}

\begin{itemize}
\item $\Train(\tdata)$: executes $\MLmodel = (\MLdef,\MLsk)=$ \linebreak $Q("train",\tdata)$.
Output $\MLdef$.

\item $\Obtain(\MLdef)$: given $\MLdef$, set up the hardware parameters $\params$ by running $\HWSetup(\secparam)$. Load program $Q$ using $\HWLoad(\params,Q)$, further receive a handle to the enclave $\hdl$. 
Execute the $\HW$ setup command for program $Q$ by running $\HWRunQuote(\hdl,("setup",\emptyset))$ and receive a $\HWquote$ (that includes the public key $\pk$).
Finally, set $\MLreq = (\params,\HWquote)$.

\item $\Provide(\MLdef,\MLsk,\MLreq)$:
abort if the quote verification failed: $\HWQuoteVerify(\params, \allowbreak \HWquote) = 0$. 
Parse $\HWquote = (\cdot,\cdot,\cdot,\pk,\cdot)$, 
compute ciphertext 
$c = \Enc(\pk,\MLsk)$ and set $\HMLmodel = (\MLdef,c)$.

\item $\Classify(\HMLmodel,\InData)$: 
parse the hidden model \linebreak $\HMLmodel$ as  $(\MLdef,c)$ and return the output of $$\HWRun(\hdl,("classify",(\MLdef,c,\InData))).$$

\end{itemize}

\begin{figure}
  \begin{pcvstack}[center]
    \procedure{$\ExpModelSec{-b}(\lambda)$}{%
      \MLmodel = (\MLdef,\MLsk) \exec \Train(\tdata) \\
      \MLreq \exec \Adv(\MLdef) \\
      \pcif b=0 \; \HMLmodel \exec \Sim_1(\MLdef,\MLreq) \\
      \pcelse \HMLmodel \exec \Provide(\MLdef,\MLsk,\MLreq)\\
      \pcreturn \hat{b} \exec \Adv^{\Or(\MLmodel,\HMLmodel,\cdot)}(\HMLmodel) \\
    }
     \pcvspace
   \vspace{-0.7cm}
    \procedure{$\Or(\MLmodel,\HMLmodel,\InData)$}{%
      \text{parse } \MLmodel = (\MLdef,\cdot)\\
      \pcif b=0 \; \pcreturn \Sim_2(\MLmodel,\InData) \\
      \pcelse \pcreturn \Classify(\MLdef,\HMLmodel,\InData)
    }
  \end{pcvstack}
  \caption{ML model secrecy experiment.} \label{fig:mlsecrecy}
\end{figure}

\subsection{Security}

\begin{theorem}
The \system presented in \autoref{subsec:system} is model secure if in the used public key encryption is indistinguishable under chosen plaintext attacks and the hardware functionality $\HW$
is remote attestation unforgeable.
\end{theorem}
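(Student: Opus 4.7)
The plan is to establish model secrecy via a short hybrid argument that transforms the real experiment into the simulated one, using correctness of the hardware-mediated decryption for the first hop and IND-CPA security of the encryption scheme for the second hop. Remote attestation unforgeability of $\HW$ plays a supporting role: it pins down the public key embedded in the adversary's request to one the reduction controls.

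I would begin by defining the simulator. On input $(\MLdef, \MLreq)$, $\Sim_1$ parses $\MLreq = (\params, \HWquote)$, extracts the public key $\pk$ contained in $\HWquote$, and outputs $\HMLmodel = (\MLdef, \Enc(\pk, 0^{|\MLsk|}))$. On input $(\MLmodel, \InData)$, $\Sim_2$ simply returns $\MLmodel(\InData)$, treating the handed-in model as a black box. The hybrid argument then uses an intermediate experiment $H_1$ that behaves like $\ExpModelSec{-1}$ but with $\Or$ returning $\MLmodel(\InData)$ directly instead of invoking $\Classify$. Unfolding the instantiation, $\Classify$ calls the hardware on the classify command with $(\MLdef, c, \InData)$, and in $H_0$ the ciphertext $c = \Enc(\pk, \MLsk)$ decrypts inside $Q$ back to $\MLsk$; correctness of the encryption scheme and of $Q$ then forces the hardware output to equal $\MLmodel(\InData)$, so $H_0$ and $H_1$ are identically distributed. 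A second hybrid $H_2$ replaces the ciphertext in $\HMLmodel$ by $\Enc(\pk, 0^{|\MLsk|})$; because $\Or$ no longer touches $c$ in $H_1$, the view in $H_2$ coincides exactly with $\ExpModelSec{-0}$ under the simulator above.

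To bound $\Adv$'s advantage between $H_1$ and $H_2$ I would reduce to IND-CPA. The reduction receives $\pk^*$ from the encryption challenger, runs $\HWSetup(\secparam)$ on its own to obtain $(\params, \HWsk)$, and simulates the hardware faithfully for $\Adv$ during the $\MLreq$ phase, with one modification: every response to a setup call substitutes $\pk^*$ for the internally sampled public key and stores $\bot$ in the corresponding secret-key slot. By remote attestation unforgeability, except with negligible probability the quote that $\Adv$ embeds in $\MLreq$ must be one produced by the simulated $\HWRunQuote$, because otherwise $\Adv$ yields an immediate attestation forgery. Hence the public key extracted from $\MLreq$ equals $\pk^*$. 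The reduction then submits $(m_0, m_1) = (0^{|\MLsk|}, \MLsk)$ to the IND-CPA challenger, plants the returned challenge ciphertext $c^*$ into $\HMLmodel = (\MLdef, c^*)$, and answers every subsequent $\Or$-query by returning $\MLmodel(\InData)$ computed from the $\MLmodel$ it trained at the start. Challenge bit $1$ reproduces $H_1$ and bit $0$ reproduces $H_2$, so any distinguisher yields an IND-CPA distinguisher of essentially the same advantage, up to the negligible forgery term.

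The main obstacle I foresee is that $\Adv$ may issue several setup commands during $\MLreq$ generation, or invoke the classify command directly on an enclave whose secret key the reduction has replaced by $\bot$. For the first, a standard polynomial-factor guessing step selects the setup invocation whose quote $\Adv$ ultimately reuses, after which the reduction can treat the other enclave instances honestly. For the second, any ciphertext under $\pk^*$ that $\Adv$ can construct on its own comes with a plaintext it already knows, so the reduction can deliver the correct classify output without a decryption oracle; the only other ciphertext under $\pk^*$ appearing in the game is the challenge $c^*$, and post-$\MLreq$ the adversary only touches $c^*$ through $\Or$, whose responses the reduction provides by evaluating $\MLmodel$ directly.
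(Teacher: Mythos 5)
Your proposal follows essentially the same route as the paper's proof: a game-hopping argument that (i) uses remote attestation unforgeability to pin the public key in $\MLreq$ to one produced by the hardware, (ii) uses correctness to replace the oracle $\Or$ with direct evaluation of $\MLmodel$, and (iii) uses IND-CPA to swap the ciphertext in $\HMLmodel$ for an encryption of zeros, arriving at the simulated experiment. You are in fact somewhat more explicit than the paper about the simulator's definition and the mechanics of the IND-CPA reduction (guessing the reused setup invocation, answering classify calls without the decryption key), but the decomposition and the role of each assumption are identical.
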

\begin{proof}
We prove this theorem using the game based approach starting with $\mathbf{GAME_0}$ that models the original ML model security experiment with bit $b=1$ and end the proof with $\mathbf{GAME_4}$, which is the experiment for $b=0$.

\medskip
\noindent $\mathbf{GAME_1}$. Similar to $\mathbf{GAME_0}$, but we abort in case the adversary outputs a valid request $\MLreq$ without running program $Q$.
It is easy to see that by making this change, we only lower the adversary's advantage by a negligible factor. In particular, an adversary for which we abort $\mathbf{GAME_1}$ can be used to break the remote attestation unforgeability of the hardware model. 

\medskip
\noindent $\mathbf{GAME_2}$. We now replace the way  oracle $\Or$ works. On a given query of the adversary,
we always run $\Sim_2(\MLmodel,\InData) = \MLmodel(\InData)$.
Note that this does not change the adversary's advantage, since both outputs should by correctness of \system give the same output on the same input.

\medskip
\noindent $\mathbf{GAME_3}$. We now replace the ciphertext given as part of $\HMLmodel = (\MLdef,c)$ to the adversary, i.e. we replace $c$ with an encryption of $0$.
This change only lowers the adversary's advantage by the advantage of breaking the security of the used encryption scheme. Note that by $\mathbf{GAME_1}$ we ensured that $\pk$ inside the request $\MLreq$ is chosen by the secure hardware and can be set by the reduction. Moreover, oracle $\Or$ works independently of $\HMLmodel$.

\medskip
\noindent $\mathbf{GAME_4}$. We change how $\HMLmodel$ is computed. Instead of running 
$\Provide(\MLdef,\MLsk,\MLreq)$, we use 
$\Sim_2(\MLdef,\MLreq) = (\MLdef, \Enc(\pk,0))$, where $\MLreq = (\cdot,(\cdot,\cdot,\cdot,\pk,\cdot))$.
It is easy to see that this game is actually the experiment $\ExpModelSec{-0}(\lambda)$ and we can conclude that our instantiation is ML model secure because the difference between experiments $\ExpModelSec{-0}(\lambda)$ and $\ExpModelSec{-1}(\lambda)$ is negligible.
\end{proof}
\section{SGX Implementation and Evaluation}
\label{sec:implement}
In the setup phase, \system attests the execution of the enclave and decrypts the data send by the service provider. Both tasks are standard and supported by Intel's crypto library~\cite{AM16}. Thus, in the evaluation we focus on the inference phase and overhead the IEE introduces to classification.

\begin{table*}[!t]
	\centering
	\caption{Average dense layer overhead for 100 executions. This comparison includes two ways of compiling the code, i.e. with and without the g++ optimization parameter -O3.}\label{table:dense_layer}
    \scalebox{1}{
	\begin{tabular}{l r c r r}
	    \toprule
	 	Matrix Dimension & \system Layer (no -O3) & \system Layer & Standard Layer (no -O3) & Standard Layer  \\
        \midrule  
        256$\times$256 & $0.401$ms & $0.234$ms & $0.164$ms & $0.020$ms\\
        512$\times$512 & $1.521$ms & $0.865$ms & $0.637$ms  & $0.062$ms \\
        1024$\times$1024 & $6.596$ms & $4.035$ms & $2.522$ms & $0.244$ms\\
        2048$\times$2048 & $37.107$ms & $26.940$ms & $10.155$ms & $1.090$ms\\
        4096$\times$4096 & $128.390$ms & $96.823$ms & $40.773$ms & $4.648$ms\\
		\bottomrule
	\end{tabular}
    }
\end{table*}
\begin{table}[!t]
	\centering
	\caption{Average convolution layer (above) and depthwise separable convolution layer (below) overhead for 100 executions and $3\times 3$ filters.}\label{table:overheadconv2d}
\setlength{\tabcolsep}{3pt}
	\begin{tabular}{l r r r}
	    \toprule
	 	Input/Output Size & \system Layer & Standard Layer& Factor \\
        \midrule  
        64$\times$224$\times$224 & $80$ms & $66$ms & $1.21$\\
        512$\times$28$\times$28 & $61$ms & $51$ms & $1.20$\\
        512$\times$14$\times$14 & $30$ms & $13$ms & $2.31$\\
        \hline
        64$\times$224$\times$224 & $41$ms & $27$ms & $1.52$\\
        512$\times$28$\times$28 & $7$ms & $7$ms & $1.00$\\
        512$\times$14$\times$14 & $2$ms & $2$ms & $1.00$\\
		\bottomrule
	\end{tabular}
\end{table}

\begin{table}[!t]
	\centering
	\caption{Average neural network evaluation overhead for 100 executions.}\label{table:networks}
    \scalebox{1}{
	\begin{tabular}{l r r r}
	    \toprule
	 	Network & \system Layer & Standard Layer & Factor \\
        \midrule  
        VGG-16 & $1145$ms & $736$ms & $1.55$\\
        MobileNet & $427$ms & $197$ms & $2.16$ \\
		\bottomrule
	\end{tabular}
    }
\end{table}

\subsection{Implementation}
We used an Intel SGX enabled desktop PC with Intel(R) Core(TM) i7-6700 CPU @ 3.40GHz that was running Ubuntu 18.04. The implementation was done using C++ and based on the code of Slalom \cite{TB18}, which uses a custom lightweight C++ library for feed-forward networks based on Eigen. If not stated otherwise, we used the \mbox{-O3} compiler optimization and C++11 standard.
Yet, porting well-known ML frameworks (for example TensorFlow) to SGX is not feasible at this point, because enclave code cannot access OS-provided features (e.g. multi-threading, disk, and driver IO). However, any model used in those frameworks can be
easily translated into \system.

We wrap standard layers to create new \system layers. Those layers take the standard input of the model layer but the weights are given in sealed form. Inside the enclave, the secret data is unsealed and forwarded to the standard layer function. \system layers are designed to be executed inside the enclave by providing \verb+ECALL+'s. See \autoref{fig:systemlayer} for more details. This approach provides means to build \system secure neural networks in a modular way.

\begin{figure}[!t]
\centering
\includegraphics[width=\columnwidth]{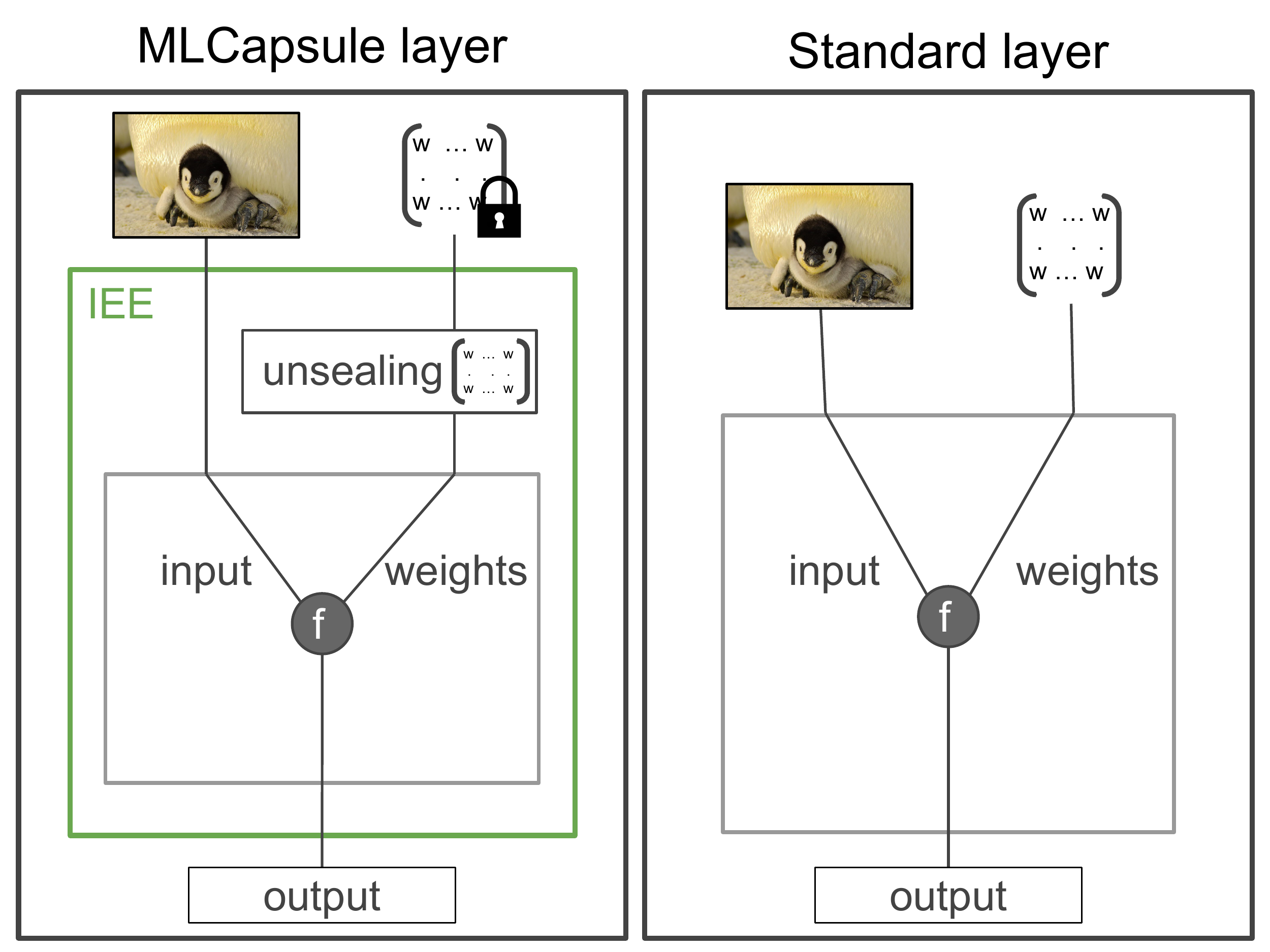}
\caption{Difference between \system and standard layer.}\label{fig:systemlayer}
\end{figure}

Since the sealed data is provided from outside the enclave, it has to be copied to the enclave before unsealing. Otherwise, unsealing will fail. We measure the execution time of \system layers as the time it takes to:
($1$) allocate all required local memory, ($2$) copy the sealed data to the inside of the enclave, ($3$) unseal the data, ($4$) perform the standard computation using the unsealed weights and plaintext input and finally ($5$) free the allocated memory.

\medskip
\noindent\textbf{Implementation Issues.}
Applications are limited to 90 MB of memory, because there is currently no SGX support for memory swapping. Linux provides an OS based memory swap, but the enclave size has to be final and should not expand to prevent page faults and degraded performance. This performance issue is visible in our results for a dense layer with weight matrix of size $4096 \times 4096$. In this case, the \system layer allocates $4 \times 4096 \times 4096 = 64$ MB for the matrix and a temporary array of the same size for the sealed data. Thus, we exceed the 90 MB limit, which leads to a decrease in performance. In particular, the execution of such a layer took $1$s and after optimization,
the execution time decreased to $0.097$s.

We overcome this problem by encrypting the data in chunks of 2 MB. 
This way the only large data array allocated inside the \system layer is the memory of the weight matrix. Using 2 MB chunks the \system layers requires only around $2 \times 2$ MB more memory than implementations of standard ML layers. 
We implemented this optimization only for data that requires more than 48 MB, e.g. in case of a VGG-16 network we used it for the first and second dense layer.

\subsection{Evaluation}

\noindent\textbf{Evaluation of Layers.}
Comparisons between \system and standard layers are given in \autoref{table:dense_layer} and \autoref{table:overheadconv2d}. 
It is worth noting that those results can be used to approximate the overhead of \system for arbitrarily networks. However, below we evaluate two full networks for image recognition. We do not compare directly to state-of-the-art cryptographic 
solutions because they use a server side ML model architecture. On the other hand, we compare
the direct execution of the computation on the CPU and inside SGX, as those results provide the overhead of implementing \system.

From \autoref{table:overheadconv2d}, we see that the overhead for convolutional layers averages around 1.2, with a peak to 2.3 with inputs of size $512\times 14\times 14$. 
In case of depth-wise separable convolutional layers, the execution time of \system layers is comparable with the execution time of standard layers. In this case, the difference is almost not noticeable for smaller input sizes. 
Applying additional activation functions or/and pooling after the convolution does not significantly influence the execution time. In case of dense layers, we observe a larger overhead. For all the kernel dimension the overhead is not larger than 25 times. We also evaluate dense layers without -O3 optimization. The results show that in such a case the overhead of \system is around the factor 3. We suspect that the compiler is able to more efficiently optimize the source code that does not use SGX specific library calls. Hence, the increase in performance is due to the optimized compilation. 

\medskip
\noindent\textbf{Evaluation of Full Classification.}
In this subsection we combine \system layers to form popular ML networks, i.e. VGG-16 and MobileNet. The first network can be used to classify images and work with the ImageNet dataset (size of images $224 \times 224 \times 3$). Similar, the second network can also be used for the same task. It is easy to see from \autoref{table:networks} that \system has around $2$-times overhead in comparison to an execution of the classification using standard layer and without the protection of SGX.
\section{Advanced Defenses} 
\label{sec:advdef}

Recently, researchers have proposed multiple attacks against MLaaS: 
reverse engineering~\cite{OASF18}, 
model stealing~\cite{TZJRR16,WG18}, 
and membership inference~\cite{SSSS17,SZHBFB19}. 
As mentioned in \autoref{sec:requirement}, 
these attacks only require a black-box access (API) to the target ML model,
therefore, their attack surface is orthogonal 
to the one caused by providing the model's white-box access to an adversary.
As shown in the literature, 
real-world MLaaS suffers from these attacks~\cite{TZJRR16,SZHBFB19}. 

In this section, we propose two new defense mechanisms against reverse engineering and membership inference (test-time defense).
We show that these mechanisms
together with a proposed defense for model stealing
can be seamlessly integrated into \system. 

\subsection{Detecting Reverse Engineering}
Oh et al.~have shown that by only having black box access to neural network model, a wide variety of model specifics can be inferred \cite{OASF18}. Such information includes training procedure, type of non-linearities, and filter sizes, which thereby turns the model step by step into a white-box model. This equally affects the safety of intellectual property and increases attack surface. 
	
\medskip
\noindent\textbf{Methodology.}
Up to now, no defense has been proposed to counter this attack. We propose the first defense in this domain and show that it can be implemented seamlessly into \system. We observe that the most effective method proposed by Oh et al.~\cite{OASF18} relies on crafted input patterns that are distinct from benign input. Therefore, we propose to train a classifier that detects such malicious inputs. Once a malicious input is recognized, service can be immediately denied which stops the model from leaking further information. Note that also this detection is running on the client and therefore the decision to deny service can be taken on the client and does not require interaction with a server.

We focus on the kennen-io method by Oh et al.~\cite{OASF18} as it leads to the strongest attack. We also duplicate the test setup on the MNIST dataset.\footnote{\url{http://yann.lecun.com/exdb/mnist/}} 
In order to train a classifier to detect such malicious inputs, we generate 4,500 crafted input image with the kennen-io method and train a classifier against 4,500 benign MNIST images. We use the following deep learning architecture: 
\begin{align*}
\texttt{Input Image} \rightarrow
\texttt{conv2d($5\times 5$, 10)} & \\
\texttt{max($2\times 2$)} & \\
\texttt{conv2d($5 \times 5$, 20)} & \\
\texttt{max($2\times 2$)} & \\
\texttt{FullyConnected($50$)} & \\
\texttt{FullyConnected($2$)} & \\
\texttt{softmax}  & \rightarrow \texttt{Output}
\end{align*}
where \texttt{conv2d($a\times a$, b)} denotes a 2d convolution with $a$ by $a$ filter kernel and $b$ filters, \texttt{max($c\times c$)} denotes max-pooling with a kernels size of $c$ by $c$, \texttt{FullyConnected($d$)} denotes a fully connected layer with $d$ hidden units and \texttt{softmax} a softmax layer. 
The network uses ReLU non-linearties and drop-out for regularization. We represent the output as 2 units -- one for malicious and one for benign. We use a cross-entropy loss to train the network with the ADAM optimizer. 

\begin{table*}[!t]
	\centering
	\caption{Overhead of detecting a model stealing attack, Prada. We assume $3000$ samples in the detection set, enough to query $5000$ benign samples. }\label{table:prada}
	\begin{tabular}{l r r r r}
	    \toprule
	 	Dataset & Size & SGX & Outside SGX & Factor \\
        \midrule  
        MNIST & 1$\times$28$\times$28 & $35$ms & $2.6$ms & $13.5$\\
        CIFAR & 3$\times$32$\times$32 & $38$ms & $10.1$ms & $3.8$ \\
        GTSRB & 3$\times$215$\times$215 & $2200$ms & $440$ms & $5$\\
		\bottomrule
	\end{tabular}
\end{table*}

\medskip
\noindent\textbf{Evaluation.}
We compose a test set of additional 500 malicious input samples and 500 benign MNIST samples that are disjoint from the training set.
The accuracy of this classifier is $100\%$, thereby detecting each attack on the first malicious example, which in turn can be stopped immediately by denying the service. 
Meanwhile, no benign sample leads to a denied service. This is a very effective protection mechanism that seamlessly integrates into our deployment model and only adds 0.832 ms (per image classification) to the overall computation time. While we are able to show very strong performance on this MNIST setup, it has to be noted that the kennen-io method is not designed to be ``stealthy'' and future improvements of the attack can be conceived that make detection substantially more difficult.

\subsection{Detecting Model Stealing}
Model stealing attacks aim at obtaining a copy from an MLaaS model~\cite{TZJRR16,PMG16}. 
Usually, this is done by training a substitute on samples rated by the victim model, resulting in a model with similar behavior and/or accuracy.
Hence, successful model stealing leads to the direct violation of the service provider's intellectual property.
Very recently, Juuti el al.~\cite{JSDMA18} propose a defense,
namely Prada,
to mitigate this attack
which we implement in~\system.

Prada maintains a growing set of user-submitted queries. Whether a query is appended to this growing set depends on the minimum distance to previous queries and a user set threshold. Benign queries lead to a constant growing set, whereas Juuti et al.\ show that malicious samples generally do not increase set size. Hence, an attack can be detected by the difference in the growth of those sets.  

As the detection is independent of the classifier, it can be easily implemented in~\system. 
The resulting computation overhead depends heavily on the user submitted queries~\cite{JSDMA18}. We thus measure the general overhead of first loading the data in the enclave and second of further computations. 

Juuti et al.\ state that the data needed per client is $1$-$20$MB. We plot the overhead with respect to the data size in~\autoref{fig:prada}. We observe that the overhead for less than $45$MB is below 0.1s.
Afterwards, there is a sharp increase, as the heap size of the enclave is $90$MB: storing more data requires several additional memory operations.   
For each new query, we further compute its distance to all previous queries in the set---a costly operation.
We assume a set size of $3,000$, corresponding to roughly $5,000$ benign queries.
\autoref{table:prada} shows that a query on the GTSRB dataset\footnote{\url{http://benchmark.ini.rub.de/}} 
is delayed by almost 2s, or a factor of five. For datasets with smaller samples such as CIFAR~\footnote{\url{https://www.cs.toronto.edu/~kriz/cifar.html.}}
or MNIST, the delay is around 35ms. 

\begin{figure}[!t]
\centering
\includegraphics[width=\linewidth]{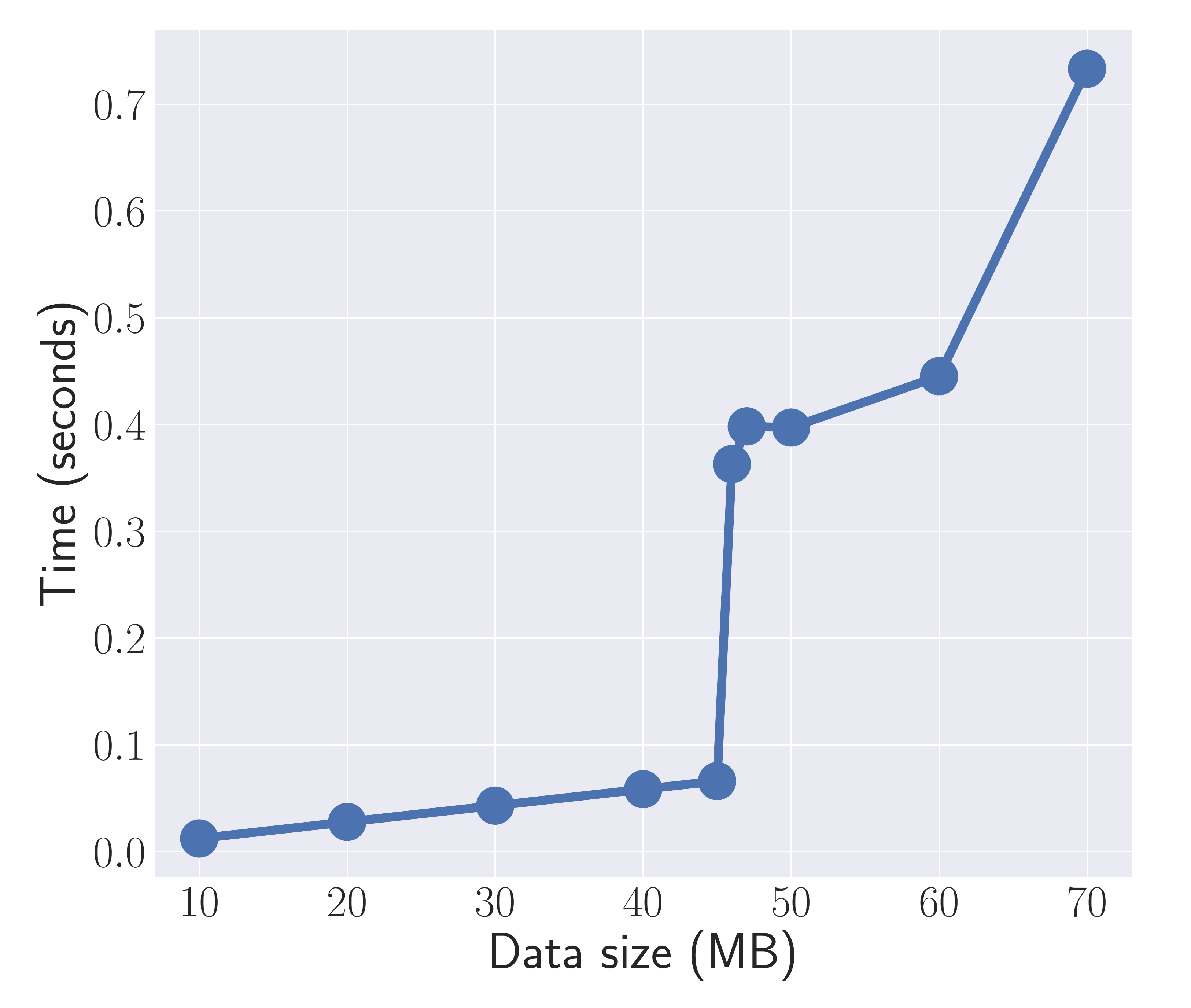}
\caption{Overhead in seconds to load additional data for a defense mechanism preventing model stealing, namely Prada~\cite{JSDMA18}.}\label{fig:prada}
\end{figure}

\begin{figure*}[!t]
\centering
\begin{subfigure}{\columnwidth}
\includegraphics[width=\columnwidth]{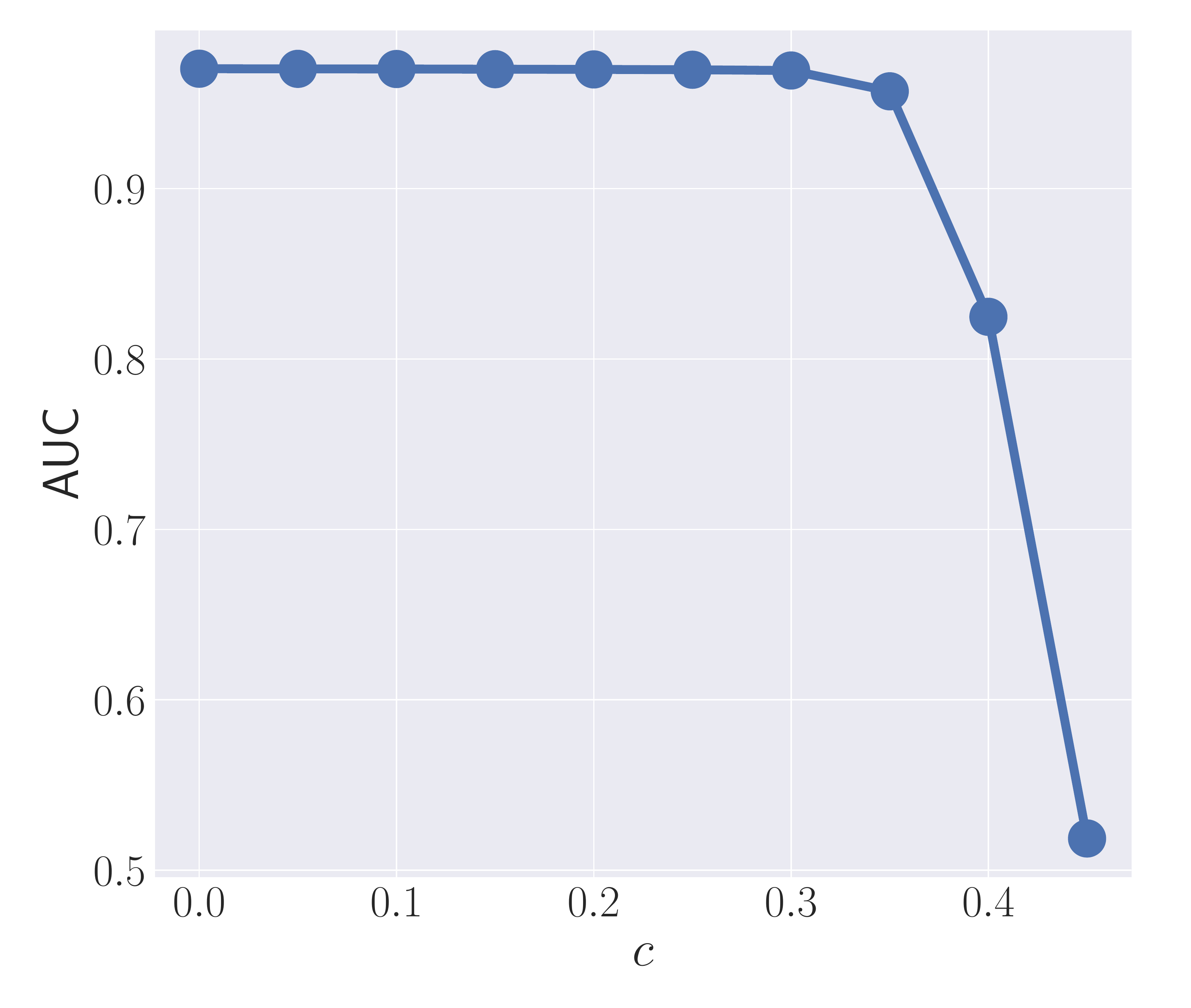}
\caption{}\label{fig:meminf_attack}
\end{subfigure}
\begin{subfigure}{\columnwidth}
\includegraphics[width=\columnwidth]{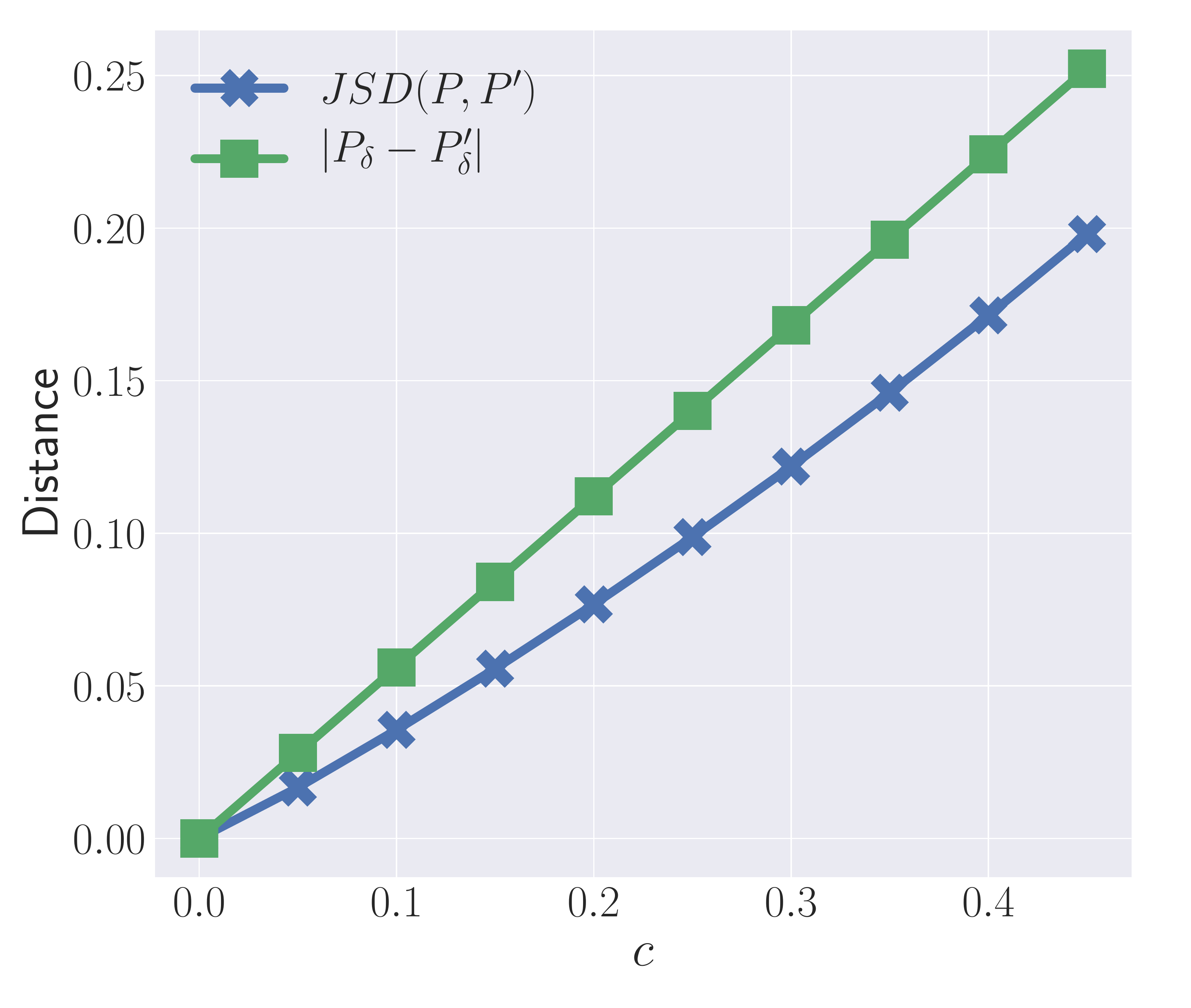}
\caption{}\label{fig:meminf_utlity}
\end{subfigure}
\caption{The relation between 
the hyperparameter controlling the noise magnitude, i.e., $c$, 
and (a) membership prediction performance 
and (b) target model utility.
${\it JSD}(\Post, \Post')$ denotes the Jensen-Shannon divergence
between the original posterior $\Post$ and the noised $\Post'$,
while $\vert \Post_\delta-\Post'_\delta \vert$ is the absolute difference between
the correct class's posterior ($\Post_\delta$)
and the noised one ($\Post'_\delta$).
}
\end{figure*}

\subsection{Membership Inference}

Shokri et al.\ demonstrate 
that ML models are vulnerable to membership inference~\cite{SSSS17}
due to overfitting:
A trained model is more confident 
facing a data point it was trained on than facing a new one (reflected in the model's posterior).
They propose to use a binary classifier to perform membership inference,
and rely on shadow models
to derive the data for training the classifier.
Even though the attack is effective, it is complicated and expensive to mount.
More recently, Salem et al.\ relax the assumption of the threat model by Shokri et al.\ 
and show that an adversary only needs the posterior's entropy to achieve a similar performance~\cite{SZHBFB19}.
This attack causes more severe privacy damages,
thus we use it in our evaluation.
However, we emphasize that our defense is general and can be applied to other membership inference attacks as well.

\begin{algorithm}[!t]
\caption{Noising mechanism to mitigate membership inference attack.}
\label{alg:meminf}
 \begin{algorithmic}[1]
 \renewcommand{\algorithmicrequire}{\textbf{Input:}}
 \renewcommand{\algorithmicensure}{\textbf{Output:}}
 \REQUIRE Posterior of a data point $\Post$, Noise posterior \Noise
 \ENSURE Noised posterior $\Post'$
\STATE Calculate $\entropy(\Post)$ \# the entropy of $\Post$\\
\STATE $\magnitude= 1-\frac{\entropy(\Post)}{\log\vert \Post \vert}$ \# the magnitude of the noise\\
\STATE $\Post' = (1-c\magnitude)\Post + c\magnitude T$
 \RETURN $\Post'$
\end{algorithmic}
\end{algorithm}

\medskip
\noindent\textbf{Methodology.}
We define the posterior of an ML model predicting a certain data point as a vector $\Post$, 
and each class $i$'s posterior is denoted by $\Post_i$.
The entropy of the posterior is defined as:
\[
\entropy(\Post) = -\sum_{\Post_i\in \Post} \Post_i \log \Post_i.
\]
Lower entropy implies the ML model is more confident on the corresponding data point.
Following Salem et al.,
the attacker predicts a data point with entropy smaller than a certain threshold 
as a member of the target model's training set, and vice versa~\cite{SZHBFB19}.

The principle of our defense is adding more (less) noise to a posterior with low (high) entropy, and publishing the noised posterior.
The method is listed in~\autoref{alg:meminf}.
In Step 1, we calculate $\entropy(\Post)$.
In Step 2, we derive from $\entropy(\Post)$ 
the magnitude of the noise, i.e., 
\[
\magnitude = 1-\frac{\entropy{\Post}}{\log\vert \Post \vert}.
\]
Here, $\frac{\entropy{\Post}}{\log\vert \Post \vert}$ is the normalized $\entropy(\Post)$
which lies in the range between 0 and 1.
Hence, lower entropy implies higher $\magnitude$, i.e., larger noise,
which implements the intuition of our defense.
However, according to our experiments,
directly using $\magnitude$ generates too much noise to $\Post$.
Thus, we introduce a hyperparameter, $c$, to control the magnitude $\magnitude$:
$c$ is in the range between 0 and 1,
its value is set following cross validation.
In Step 3, we add noise $\Noise$ to $P$ with $c\magnitude$ as the weight.
There are multiple ways to initialize $\Noise$, here, 
we define it as the class distribution of the training data.
Larger $c\magnitude$ will cause the final noised 
$\Post'$ to be more similar to the prior,
which reduces the information provided by the ML model.

Our defense is a test-time defense, i.e., it does not modify the original ML model.
Meanwhile, previous defense mechanisms modify the ML model
which may affect the model's performance~\cite{SSSS17,SZHBFB19}.
Moreover, previous mechanisms
treat all data points equally, 
even those that are unlikely to be in the training data.
In contrary, our defense mechanism adds different noise based on the entropy of the posterior.

\medskip
\noindent\textbf{Evaluation.}
For demonstration, we perform experiments with VGG-16 trained on the CIFAR-100 dataset.
Our experimental setup follows previous works~\cite{SZHBFB19}.
Mainly, we divide the dataset into two equal parts, and use one part to train the VGG model, i.e., the training set, the other part is referred to as the testing set.
Every data point in the training set is essentially a member for membership inference,
while every data point in the testing set is a non-member.

\autoref{fig:meminf_attack} shows the result.
As we can see, setting $c$ to $0$, i.e., not adding any noise, 
results in a high AUC score (0.97) 
which means the attack is very effective.
The AUC score starts dropping when increasing the value of $c$ as expected.
When the value of $c$ approaches $0.5$, the AUC score drops to almost 0.5, this means the best an attacker can do is random guessing the membership state of a point.

We also study the utility of our defense, 
i.e., how added noise affects the performance of the original ML model.
From \autoref{alg:meminf},
we see that our defense mechanism only adjusts the confidence values 
in a way that the predicted labels stay the same.
This means the target model's accuracy does not change.

To perform an in-depth and fair utility analysis,
we report the amount of noise added to the posterior.
Concretely, we measure the Jensen-Shannon divergence
between the original posterior ($\Post$) and the noised one ($\Post'$),
denoted by ${\it JSD}(\Post, \Post')$,
following previous works~\cite{MPS13,BHPZ17}.
Formally, ${\it JSD}(\Post, \Post')$ is defined as:
\[
{\it JSD}(\Post, \Post') = 
\sum_{\Post_i \in \Post}\Post_i \log\frac{\Post_i}{M_i}+
\Post'_i \log\frac{\Post'_i}{M_i}
\]
where $M_i = \frac{\Post_i+\Post'_i}{2}$.
Moreover, we measure the absolute difference
between the correct class's original posterior ($\Post_\delta$)
and its noised version ($\Post'_\delta$),
i.e., $\vert\Post_\delta-\Post'_\delta \vert$,
this is also referred to as the expected estimation error 
in the literature~\cite{STBH11,BHZLEB18,ZHRLPB18}.
In \autoref{fig:meminf_utlity},
we see that both ${\it JSD}(\Post, \Post')$ 
and $\vert \Post_\delta-\Post'_\delta \vert$
increase monotonically 
with the amount of noise being added (reflected by $c$).
However, when $c$ is approaching 0.5,
i.e., our defense mechanism can mitigate the membership inference risk completely,
${\it JSD}(\Post, \Post')$ 
and $\vert \Post_\delta-\Post'_\delta \vert$
are still both below 0.25: Our defense mechanism
is able to preserve the target model's utility to a large extent.

We measure the overhead of this defense
and it only adds 0.026ms to the whole computation.
This indicates our defense can be very well integrated into~\system.

\section{Discussion}
\label{sec:discussion}

Our formal proof shows that our setting is indistinguishable from the access to an MLaaS API. We want to emphasize that cryptographic proofs do not guarantee security in the case of side channel attacks, for example timing attacks.
Also other attacks on classifiers than the ones discussed here exist, for example evasion attacks. Such evasion attacks were found to be hard to defend~\cite{CW17}. Current state-of-the-art defenses are however applied before inference~\cite{MMSTV17,RSL18,HKWW17}, and can thus be integrated into \system without overhead. Additionally, any defense or mitigation in \system is transparent, as the code can be inspected, and thus \system does not rely on security by obscurity. 

Finally, in \system, the detection of an attack is promising: in contrast to the standard MLaaS setting, an enclave is tied to a particular person. It is hence possible to identify a user who submitted malicious data. 
Additionally, setting up a fresh enclave requires some effort. This implies that the service provider is actually able to persecute or expel clients who are found out to run attacks.

\section{Conclusion}
\label{sec:conclusion}

We have presented a novel deployment mechanism for ML models. It provides the same level of security of and control over the model as conventional server-side MLaaS execution. At the same time, it provides perfect privacy of the user data as it never leaves the client. In addition, we show the extensibility of our approach and how it facilities a range of features from pay-per-view monetization to advanced model protection mechanisms -- including the very latest work on model stealing and reverse engineering. 

We believe that this is an important step towards the overall vision of data privacy in machine learning \cite{PMSW18} as well as secure ML and AI \cite{SSPPMKJJHGGGCA17}. Beyond the presented work and direct implications on data privacy and model security -- this line of research implements another line of defense that in the future can help to tackle several problems in security related issues of ML that the community has been struggling to make sustainable progress. For instance, a range of attacks from membership inference, reverse engineering to adversarial perturbations rely on repeated queries to a model. Our deployment mechanism provides a framework that is compatible with wide spread use of ML models -- but yet can control or mediate access to the model directly (securing the model) or indirectly (advanced protection against inference attacks).

\balance
\bibliographystyle{ACM-Reference-Format}
\bibliography{main.bib}
\end{document}